\documentclass[12pt]{article}

\usepackage[titletoc,title]{appendix}
\usepackage[latin1]{inputenc}
\usepackage{amsmath}
\usepackage{amssymb}
\usepackage{graphicx}
\usepackage[margin=1.25in]{geometry}
\usepackage[bottom]{footmisc}
\usepackage{indentfirst}
\usepackage{endnotes}
\usepackage{mathabx}
\usepackage{enumerate}
\usepackage{rotating}
\usepackage{multirow}
\usepackage{booktabs,calc}
\usepackage{threeparttable}
\usepackage{threeparttable}
\usepackage{float}
\usepackage{url}
\usepackage{epsfig}
\usepackage{caption,subcaption}
\usepackage{float}
\usepackage{url}
\usepackage{amsmath,amsfonts}
\usepackage{amsthm}
\usepackage{mathrsfs}

\usepackage[onehalfspacing]{setspace}
\usepackage{apptools}

\newcommand{\suchthat}{\;\ifnum\currentgrouptype=16 \middle\fi|\;}

\newtheorem{prop}{Proposition}
\newtheorem{lemma}{Lemma}



\newtheorem{defn}{Definition}

\theoremstyle{remark}

\usepackage{chngcntr}
\usepackage{graphicx}
\usepackage{amsmath}
\usepackage{amssymb}
\usepackage{parskip}
\usepackage{sgame}
\usepackage{color}
\usepackage{tikz}
\usetikzlibrary{trees,calc}

\DeclareMathOperator*{\argmax}{\arg\!\max}

\newcommand{\eps}{\varepsilon}

\newcommand{\R}{\mathbb R}

\usepackage{natbib}

\sloppy

\begin{document}

\title{\large{Nontransitive Preferences and Stochastic Rationalizability: A Behavioral Equivalence}\thanks{We thank P.J. Healy for helpful comments and advice. John Rehbeck is supported in part by the National Science Foundation grant NSF-2049749. Opinions, findings, conclusions, or recommendations offered here are those of the authors and do not necessarily reflect the views of the National Science Foundation. Mogens Fosgerau has received funding from the European Research Council (ERC) under the European Union's Horizon 2020 research and innovation programme (grant agreement No. 740369).  Any errors are our own.}}

\author{
    \small{Mogens Fosgerau} \\
    \small{Department of Economics} \\
    \small{University of Copenhagen} \\
    \small{mogens.fosgerau@econ.ku.dk}\\
    \and
    \small{John Rehbeck}  \\
    \small{Department of Economics} \\
    \small{The Ohio State University} \\
    \small{rehbeck.7@osu.edu}\\
}
\date{\small{ \today }}

\maketitle
\begin{abstract}
Nontransitive choices have long been an area of curiosity within economics. However, determining whether nontransitive choices represent an individual's preference is a difficult task since choice data is inherently stochastic. This paper shows that  behavior from nontransitive preferences under a monotonicity assumption is equivalent to a transitive stochastic choice model. In particular, nontransitive preferences are regularly interpreted as a strength of preference, so we assume alternatives are chosen proportionally to the nontransitive preference. One implication of this result is that one cannot distinguish ``complementarity in attention" and ``complementarity in demand." \\
\noindent \emph{JEL Classification Numbers:} D00, D01, D90  \\
\emph{Keywords:} Stochastic choice, nontransitive preference, complementarity 
\end{abstract}

\newpage
\setlength\parindent{24pt}
\section{Introduction}
Economics has long been interested in understanding how the context of a decision problem affects choice. However, once one allows  context to affect choice, it is easy to find revealed preferences that are nontransitive.\footnote{One example of nontransitive revealed preference is the Condorcet paradox \citep{condorcet1785essai} where a voter chooses between pairs of candidates in a way that cycles through candidates. This thought experiment was studied empirically by \cite{may1954intransitivity} where alternatives were imbued with attributes that differed in various dimensions and found revealed preference cycles. A general theory of deterministic nontransitive preferences for binary choice was characterized in \cite{shafer1974nontransitive}. Greater interest to study the psychological motivation behind nontransitive choices followed the regret theory of \cite{loomes1982regret}. A more modern model that allows nontransitive preferences is the salience theory of \cite{bordalo2012salience}.} While nontransitive preferences have garnered some interest in economics, they are unsettling for those accustomed to utility theory where the value of an object is well defined. Moreover, testing models of nontransitive preferences is difficult since models of nontransitive preferences are often deterministic while the data used to test the models is inherently stochastic. Thus, many papers following \cite{tversky1969intransitivity} have looked for violations of stochastic transitivity  to find evidence in support of nontransitive preferences.\footnote{An example of stochastic transitivity is weak stochastic transitivity, which states that for a triple of alternatives, $x,y,z \in X$, where $p(x,y)$ is the probability $x$ is chosen from the binary menu $\{x,y\}$, if $p(x,y) \ge \frac{1}{2}$ and $p(y,z)\ge \frac{1}{2}$, then $p(x,z) \ge \frac{1}{2}$.}

 This paper shows when choice is stochastic that it is not possible to distinguish behavior from a large class of nontransitive preference models from a model of transitive stochastic choice. The only restrictions we make on nontransitive preferences are that they are continuous in the value of an alternative and satisfy a monotonicity assumption. The monotonicity assumption generalizes the requirement that even though preferences are nontransitive, if an alternative is made better off holding all else fixed, then that alternative should be chosen more often. This result is important because there are now many papers that try to distinguish nontransitive preferences only using choice data such as \cite{tversky1969intransitivity}, \cite{loomes1991observing}, \cite{humphrey2001non}, and \cite{birnbaum2010testing}. The equivalence of nontransitive and transitive stochastic choice shows that one needs an additional source of data to distinguish nontransitive preferences. For example, one might use decision times as in \cite{alos2022identifying} or brain imaging data as in \cite{kalenscher2010neural}.

 The behavioral equivalence we show also sheds some light on different interpretations of stochastic choice behavior. For example, there are several experimental studies showing that improving attribute values of one alternative can cause the choice probability of a different alternative to go up \citep{huber1982adding, simonson1989choice}.\footnote{Examples of this are the ``attraction effect" and ``compromise effect."} This is an economic puzzle since models of additive random utility (e.g. logit \citep{mcfadden1973conditional}) impose the property that if an alternative is made better, then all other choice probabilities must decrease. Broadly, the experimental evidence  shows that alternatives can be ``complements" in stochastic choice. Here, we say there are complements since increasing the desirability of one alternative increases the probability a different alternative is chosen.
 
 There are two natural ways to allow complementarity in stochastic choice. One natural way to allow complementarity is via ``complementarity in attention" where the value of an alternative depends on how it is compared to the available alternatives. This approach is exemplified in the salience functions of \cite{bordalo2012salience} and \cite{bordalo2013salience}. The second way to allow complementarity is through ``complementarity in demand" by allowing a non-linear preference term that generalizes the standard additive error as discussed in \cite{allen2019revealed} and \cite{allen2020hicksian}. This complementarity occurs in perturbed utility models as exemplified by the inverse product differentiation logit model of \cite{fosgerau2022inverse}. Our result shows that any complementarity in attention can be represented as complementarity of demand and vice-versa. Therefore, these types of complementarity are behaviorally indistinguishable. 
 
 We now discuss how we model nontransitive preferences. While there are many ways to model nontransitive preferences that are fairly structured, we take a relatively reduced form approach. We do this for a few reasons. First, many models of nontransitive preference are designed only for binary choice settings. Our approach is flexible enough to allow for many alternatives. Second, we focus on how the value of alternatives interact rather than modeling the attribute values directly. We do this since it is difficult to identify relevant attributes ex-ante. Third, since we do not explicitly model attributes, we avoid complications of modelling the interactions of attributes. For example, \cite{fishburn1984multiattribute, bouyssou1986some}, and \cite{vind1991independent} assume that attributes interact in an additive separable way when making a decision. In contrast, we assume there is just some function of the alternative values that governs choice.    
 
 We now discuss more formally what we mean by a transitive and nontransitive model of choice in the deterministic setting.\footnote{Here, we use nontransitive choice to also refer to choice that might be transitive. This is similar to how nonexpected utility theory contains expected utility as a special case.} We assume that each alternative $a$ has some utility relevant value $v_a \in \mathbb{R}$.\footnote{One can think of the value of an alternative being elicited via a multiple price list or a willingness to pay.} We allow the value of each alternative to be in the real numbers since we assume there are attributes such as price or quality that affect this value but are not modeled. For a standard deterministic preference maximizer, an individual would choose an alternative from the menu that has the highest value $v_a$. For the nontransitive choice model, we assume for each alternative $a$ there is a function $T_a( (v_b)_{b\in A} )$ that takes the value of all alternatives in menu $A$ and assigns a value to alternative $a$. Thus, any perceived similarity of difference between attribute values of alternatives, reference dependence, or other psychological effects are not explicitly modeled. The values taken by the $T_a$ terms are sometimes interpreted as a strength of preference such as in \cite{bell1982regret} and \cite{dyer1982relative}. The difference between the deterministic transitive and nontransitive models can be summarized where 
 \begin{center}
 \begin{tabular}{l l}
        Transitive choice: & $ \argmax_{a\in A} \{ v_a \}$  \\
       Nontransitive choice:  & $ \argmax_{a \in A} \{T_a( (v_b)_{b\in A} )\}$ .
\end{tabular}
\end{center}

So far, we have only discussed how we have modeled preferences. However, it is worthwhile discussing some issues when trying to take models of deterministic preference to inherently stochastic data. One way to account for stochastic data is by introducing  random variables $\eps=(\eps_a)_{a \in A}$ and $\eps'=(\eps'_a)_{a \in A}$ and studying choices generated according to 
\[ \argmax_{a\in A} \{ v_a + \eps_a \} \quad \text{and} \quad \argmax_{a \in A} \{ T_a( (v_b)_{b\in A} ) + \eps'_a \}. \]
From the above maximization process, there are several measurement problems. First, $T_a$ and $v_a$ are in different units. This means if one makes an assumption on the error term  $\eps'$, then it is possible that some other error term $\eps$ generates the same choice probabilities. Second, the $T_a$ functions can be transformed without changing the deterministic ordering, but these transformations can change the choice probabilities. Throughout the paper, we make a high level psychological assumption on how $T_a$ generate choice probabilities to avoid introducing error terms. In detail, since the $T_a$ are often interpreted as a strength of preference for alternative $a$ so we assume that alternatives are chosen proportionally to $T_a$. This is justified by psychological research that examines perceptual errors such as \cite{thurstone1927law} and \cite{luce1959individual}.\footnote{Perceptual errors are those that occur when there is some objectively measurable amount that can be learned but is not known when the individual answers a question. For example, asking an individual ``Which of these two stones weighs more?" when they do not have access to a scale to measure the true weights is an example where perceptual errors occur.} 

Now that we have discussed how we map nontransitive preferences to choice probabilities, we discuss our notion of behavioral equivalence. Within the paper, we look at whether a transitive perturbed utility model (PUM) \citep{mcfadden2012theory,allen2019revealed} can rationalize stochastic choices generated by nontransitive preferences. In detail, we look for a transitive perturbed utility model where an individual chooses probabilities $p=(p_a)_{a \in A}$ that are non-negative and sum to one, there is a cost function $C(p)$, and choice probabilities are generated by maximizing
\[ \sum_{a \in A} v_a p_a - C(p). \]
We call this a transitive PUM since the values of alternatives enter linearly and do not interact with one another. Perturbed utility models assume that individuals have a  preference to randomize as suggested in \cite{machina1985stochastic}. It is important to allow a preference for randomization since there is growing descriptive evidence that supports this viewpoint.\footnote{For example, \cite{agranov2017stochastic} consecutively repeat several choice problems and find individuals randomize, \cite{feldman2022revealing} find individuals choose mixtures of lotteries that are positively correlated with repeated decisions, and \cite{agranov2020stable} find individuals choose randomly in similar ways when playing games and making individual decisions.} 
This class of models also generalizes additive random utility models so it can account for stochastic behavior arising from either a preference to randomize or additive random shocks.\footnote{To see that any additive random utility model can be written as a PUM, see \cite{hofbauer2002global}, \cite{allen2019identification}, or \cite{sorensen2022mcfadden}.}

This paper contributes to the large literature in economics looking at when models can be distinguished from observable behavior. One classic example is the long literature following \cite{samuelson1938note} that began a further study of whether preferences and utility theory could be empirically distinguished via choice behavior. Some more modern examples looking at behavioral equivalences include \cite{matejka2014rational} and \cite{masatlioglu2016behavioral}. \cite{matejka2014rational} show that logit can be thought of as resulting from information acquisition by an agent with limited attention.\footnote{\cite{Fosgerau2016r} generalizes this result and shows any additive random utility model can be thought of as resulting from information acquisition by an agent with limited attention.} \cite{masatlioglu2016behavioral} show that the models of stochastic reference dependence, quadratic preferences, and probability weighting are all behaviorally equivalent under certain restrictions and how one can behaviorally distinguish between these models.

This paper also contributes to the literature on nontransitive preferences. Before discussing the literature in more detail, there are many papers written by Peter Fishburn on nontransitive preferences (e.g., \cite{fishburn1982nontransitive,fishburn1984multiattribute, fishburn1988context,fishburn1989non,fishburn1990unique} ) including an insightful survey \citep{fishburn1991nontransitive}. Throughout several of Fishburn's papers, one thing mentioned in passing is that nontransitive preferences might be able to be thought of as transitive in some richer class of preferences over lotteries. We formalize this conjecture under a monotonicity assumption.

For a broader review of nontransitive preferences, some of the first work to examine nontransitive preferences was in \cite{katzner1971demand}, \cite{sonnenschein1971demand}, and  \cite{shafer1974nontransitive}. These results are general and respectively examine how to evaluate demand, whether equilibrium exist, and representation theorems for nontransitive preferences.\footnote{Other papers that may be of interest to the reader are revealed preference characterizations of nontransitive preferences in \cite{john2001concave}, \cite{quah2006weak}, and \cite{keiding2013revealed}.} Greater interest in nontransitive preference followed after the formalization of regret theory in \cite{bell1982regret} and \cite{loomes1982regret}. In addition, nontransitive choices can occur in theories of multiattribute decision making such as the salience theory in \cite{bordalo2012salience} and \cite{bordalo2013salience}. Both regret theory and salience theory have generated large literatures looking at psychological motivations of behavior and whether these models can yield better predictions. Our paper contributes to this literature by showing how models of nontransitive preference when measured with noisy data can be rationalized by a transitive stochastic choice model.

The rest of the paper proceeds as follows. Section~\ref{sec:setup} describes in more detail the mathematical setup, the assumptions made on nontransitive preferences, and defines the transitive perturbed utility model. Section~\ref{sec:results} provides the main results of the paper. Section~\ref{sec:conclusion} gives our final remarks.

\section{Setup}\label{sec:setup}
We begin with the deterministic model of nontransitive preferences. Suppose there are $X$ alternatives and we consider a nonempty subset $A \subseteq X$. Our analysis focuses on one subset of alternatives since the same analysis can be repeated for each subset. It also allows us to avoid further indexing of the subset in the arguments. We assume there is a continuous function  $T:\mathbb{R}^{A} \rightarrow \mathbb{R}_{+}^{A} \setminus \{0\}$ where the function $T$ allows the values of alternatives to interact with one another depending on the menu $A$.\footnote{Ruling out $T(v) = 0$ assumes there is always some alternative (not necessarily fixed for all $v \in \mathbb{R}^A$) that is perceived as a good.} Here, the value $T_{a}(v)$ is a non-negative value of alternative $a$ in the presence of values in the vector $v = (v_a)_{a \in A} \in \mathbb{R}^A$. We allow the values of $v_a \in \mathbb{R}$ since we assume there are attributes that affect the value but are not modeled so that alternative values can be arbitrarily large or small.\footnote{For example, alternatives are often modeled to have values $v_a = q_a - c_a$ where $q_a$ is a measure of quality and $c_a$ is the cost of the alternative. Here $q_a$ captures all attribute interactions and the cost of an alternative $c_a$ is linear. Thus, values in $\mathbb{R}$ mean cost and quality can be arbitrarily large. The results do not depend on this assumption.} One way to elicit these values would be to use a multiple price list or a willingness to pay measure without the context of other alternatives. For example, see \cite{somerville2022range} and \cite{im2023multiple} for eliciting values for goods with attributes. 

We now define how to go from the deterministic setting to a stochastic setting. We assume that $p(v) = (p_a(v))_{a \in A}$ are choice probabilities. Thus, for each $a \in A$ we have $p_a(v) \ge 0$ and $\sum_{a \in A} p_a(v)=1$. We also represent the set of probability distributions over the alternatives by  $\Delta = \{ p \in \mathbb{R}_+^A \mid \sum_{a \in A} p_a = 1 \}.$

When going to a stochastic setting from deterministic nontransitive preferences, we make an intuitive assumption that agrees with the interpretation of nontransitive preferences. The terms $T_a(v)$ are regularly interpreted as a ``strength of preference" such as in \cite{bell1982regret} and \cite{dyer1982relative}. Thus, we assume the $T_a(v)$ terms map to a stochastic setting proportional to the strength of preference. This is consistent with the Weber-Fechner law where physical stimuli are perceived proportionally which motivated the work of \cite{thurstone1927law} and \cite{luce1959individual}. Therefore, we define the \emph{normalized nontransitive preference} that generates stochastic choice. 

\begin{defn}\label{def:nontransitivepreference}
For a nontransitive preference $T:\mathbb{R}^{A} \rightarrow \mathbb{R}_{+}^{A}$, we assume that choice probabilities are proportional to the strength of preference. Therefore, we define the \emph{normalized nontransitive preference} $\tilde{T}: \mathbb{R}^A \rightarrow \Delta$ where for each $a \in A$
\[ p^{\tilde{T}}_a(v) = \tilde{T}_a(v) = \frac{T_a(v)}{\sum_{b \in A} T_b(v)} .\]
\end{defn}

The normalized nontransitive preference preserves the ordering of the nontransitive preference for every $v \in \mathbb{R}^{A}$. To see this, note that 
\[ T_a(v) \ge T_b(v) \; \Leftrightarrow \; \frac{T_a(v)}{\sum_{c \in A} T_c(v)} \ge \frac{T_b(v)}{\sum_{c \in A} T_c(v)} \; \Leftrightarrow \; \tilde{T}_a(v) \ge \tilde{T}_b(v) .   \]
Therefore, from the perspective of deterministic preference maximization this assumption is without loss of generality. This transformation is also helpful since we now have a unitless measure of strength of preference. The normalized nontransitive preference also immediately gives a mapping into choice probabilities where $p^{\tilde{T}}(v) = \tilde{T}(v)$. Lastly $\tilde{T}$ is continuous since it is a continuous transform of the deterministic nontransitive preferences. 

At this point, the theory of nontransitive preferences does not have any empirical content. To see this, note that for any given choice probability function $p(v)$, we can set construct a nontransitive preference $T(v) = p(v)$, which leads to  $p$ as the stochastic choice according to Definition~\ref{def:nontransitivepreference}. This means that any behavior can be represented via a nontransitive preference unless further conditions are imposed. Continuity of $p$ is not essential for this argument.

To make the model testable, we place a restriction on the normalized nontransitive preferences $\tilde{T}$, namely that they satisfy a general monotonicity condition. The monotonicity condition we place on $\tilde{T}$ is the multivariate generalization of monotonicity. This condition is similar to the strong axiom of revealed preference (see for example \cite{Richter1966}), but rather than being ordinal it is cardinal since we are working with measured values.  One convenient feature of the monotonicity condition is that it implies the normalized nontransitive preference for an alternative is increasing in the value of that alternative holding all else fixed.

\begin{defn}
The normalized nontransitive preference satisfies cyclic monotonicity when for any sequence $\{v^i\}_{i=1}^I$ with $v^i \in \mathbb{R}^{A}$ that 
\[ \sum_{i=1}^I \sum_{a \in A} \tilde{T}_a(v^i)(v_a^{i} - v_a^{i+1}) \ge 0 \]
where $v^{I+1}=v^1$.
\end{defn}

This feature assumes that as an alternative has a higher value holding all else fixed, then it will gain choice probability. For example, choose an alternative $a'$ and consider values $v=(v_{a'},v_{-a'})$ and $v'=(v'_{a'},v_{-a'})$ where $v_{-a'}$ are values of alternatives other than $a'$ and $v_{a'} \neq v'_{a'}$. Applying cyclic monotonicity, we get that 
\[ (\tilde{T}_{a'}(v) - \tilde{T}_{a'}(v')) (v_{a'} - v'_{a'}) \ge 0  \]
so that increasing the value of the alternative makes it relatively more appealing. We note that it is important to work with the normalized values rather than the direct nontransitive preference $T$ since an increase in the level of some $T_a$ does not necessarily imply an increase in the relative strength of preference to other alternatives. Cyclic monotonicity can be tested by examining whether a normalized nontransitive preference satisfies this monotonicity with elicited values via a linear programming problem similar to the one in \cite{allen2019revealed}. 

We now discuss our notion of a stochastic rationalization by a transitive perturbed utility model (PUM) \citep{mcfadden2012theory, allen2019revealed}. We focus on the transitive perturbed utility model since it encompasses both additive random utility models and a preference for randomization. We require the transitive perturbed utility model to have a unique choice probability that maximizes an individual's preferences. 

\begin{defn}\label{def:PUM}
We say a stochastic choice function $p(v)$ is rationalized by a \emph{transitive perturbed utility model} when there exists a convex function $C:\Delta \rightarrow \mathbb{R}$ such that for every $v \in \mathbb{R}^A$ the choice function is uniquely given by 
\[ p(v) = \argmax_{p \in \Delta} \left\{ \sum_{a \in A} v_a p_a - C(p) \right\} .\]
\end{defn}

We briefly mention a few points about the rationalization. We call this a transitive perturbed utility model since there is no direct interaction between the $v$ terms in this model. This means that any kind of complementarity or substitutability arises from the convex function $C$. Therefore, we have shut off ``complementarity in attention" but allow ``complementarity in demand." Our main result will show that these two notions of complementarity are equivalent. Therefore, ``complementarity in attention" and ``complementarity in demand" capture have the same observable restrictions on behavior from choice data. If one takes a stronger stance on complementarity and incorporates other observables such as decision times as in \cite{alos2022identifying} or neuronal activity as in \cite{kalenscher2010neural}, then it may be possible to distinguish between these two types of complementarity.  

\section{Main Results}\label{sec:results}
We now present our main result. We show that any normalized nontransitive preference that is cyclically monotone is equivalent to a transitive PUM. Here, the equivalence means that any normalized nontransitive preference that is cyclically monotone can be represented by a transitive perturbed utility model and vice-versa. We record the result below and give a proof.

\begin{prop}
    A normalized nontransitive preference satisfies cyclic monotonicity if and only if it can be rationalized by a transitive PUM.
\end{prop}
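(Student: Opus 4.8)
The plan is to use the classical correspondence, due to Rockafellar, between cyclically monotone maps and subdifferentials of convex functions, together with Fenchel conjugacy to pass between the surplus function of the PUM and its cost $C$. Writing $\langle v,p\rangle := \sum_{a\in A} v_a p_a$, the cyclic monotonicity condition is exactly the statement that $v \mapsto \tilde{T}(v)$ is cyclically monotone in the sense of convex analysis, i.e. a candidate selection from the subdifferential of a convex function of $v$. On the PUM side, the associated surplus $G(v) := \max_{p\in\Delta}\{\langle v,p\rangle - C(p)\}$ is convex in $v$ as a pointwise maximum of affine functions, its maximizer satisfies $p(v)\in\partial G(v)$, and $C$ and $G$ are Fenchel conjugates. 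Thus both conditions encode ``$\tilde{T}$ is the gradient of a convex function of $v$,'' and the proof makes this precise in each direction.

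For the direction that a transitive PUM implies cyclic monotonicity, I would argue directly from optimality and telescoping, without naming $G$. Fix a sequence $\{v^i\}_{i=1}^I$ with $v^{I+1}=v^1$. Optimality of $\tilde{T}(v^{i+1})$ at $v^{i+1}$ against the feasible competitor $\tilde{T}(v^i)$ gives
\[ C(\tilde{T}(v^i)) - C(\tilde{T}(v^{i+1})) \ge \langle v^{i+1}, \tilde{T}(v^i) - \tilde{T}(v^{i+1})\rangle. \]
Summing over $i$, the left-hand side telescopes to zero, and the resulting inequality rearranges (after a cyclic reindexing) to $\sum_{i=1}^I \langle \tilde{T}(v^i), v^i - v^{i+1}\rangle \ge 0$, which is exactly cyclic monotonicity.

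For the converse, I would invoke Rockafellar's theorem: since $\tilde{T}$ is cyclically monotone there is a closed proper convex $G:\mathbb{R}^A\to\mathbb{R}$ with $\tilde{T}(v)\in\partial G(v)$ for all $v$, and because every subgradient lies in the bounded simplex $\Delta$, the explicit chain construction yields a $G$ that is finite and Lipschitz on all of $\mathbb{R}^A$. I would then upgrade membership to differentiability using continuity of $\tilde{T}$: at points of differentiability $\nabla G = \tilde{T}$, and since $\partial G(v)$ is the convex hull of limits of nearby gradients, continuity of $\tilde{T}$ forces $\partial G(v)=\{\tilde{T}(v)\}$, so $G$ is everywhere differentiable with $\nabla G = \tilde{T}$. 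Finally I set $C := G^*$. Two facts close the argument. First, because $\sum_{a} \tilde{T}_a \equiv 1$ and $\tilde{T}\ge 0$ one gets $G(v+t\mathbf 1)=G(v)+t$ and $\partial G\subseteq \mathbb{R}_+^A$, which forces $\operatorname{dom}(C)\subseteq\Delta$; together with the Fenchel equality $\langle v,\tilde{T}(v)\rangle = G(v)+C(\tilde{T}(v))$, the point $\tilde{T}(v)$ attains $\max_{p\in\Delta}\{\langle v,p\rangle - C(p)\}$. Second, since $G$ is finite and differentiable on all of $\mathbb{R}^A$ it is essentially smooth, so by Rockafellar's duality between essential smoothness and essential strict convexity its conjugate $C$ is essentially strictly convex; hence the maximizer is unique and equals $\tilde{T}(v)$, delivering the required transitive PUM of Definition~\ref{def:PUM}.

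The main obstacle is the converse, and specifically the two upgrades beyond the bare existence statement of Rockafellar's theorem. The theorem only guarantees $\tilde{T}(v)\in\partial G(v)$ for some convex $G$, whereas Definition~\ref{def:PUM} demands a \emph{unique} maximizer; bridging this gap requires (i) using continuity of $\tilde{T}$ to conclude that $G$ is genuinely differentiable with gradient $\tilde{T}$, rather than merely admitting $\tilde{T}$ as one of possibly many subgradients, and (ii) converting this smoothness of $G$ into strict convexity of the conjugate $C$ on the relevant face of $\Delta$ so that the $\argmax$ is a singleton. A secondary technical point is verifying that $C=G^*$ is finite precisely on $\Delta$, using translation invariance along $\mathbf 1$ and nonnegativity of the gradients, so that $C$ is a legitimate cost function on the simplex.
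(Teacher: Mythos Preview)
Your proof is correct and follows the same overall architecture as the paper's: Rockafellar's characterization of cyclically monotone maps as subdifferential selections, together with Fenchel conjugacy, for the forward direction, and a direct optimality-plus-telescoping argument for the converse.

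The one substantive difference is how you upgrade $\tilde{T}(v)\in\partial G(v)$ to $\partial G(v)=\{\tilde{T}(v)\}$. The paper packages this as a separate lemma (Lemma~\ref{lem:maximalCM}), invoking a result of Crouzeix et al.\ to conclude that a continuous single-valued cyclically monotone correspondence on $\mathbb{R}^A$ is automatically \emph{maximal} cyclically monotone, hence coincides with the full subdifferential $\partial f$. You instead argue directly: the Rockafellar chain construction yields a globally Lipschitz $G$ (since every chain slope lies in the bounded simplex), so $G$ is differentiable almost everywhere with $\nabla G=\tilde{T}$ at those points, and then the characterization of $\partial G(v)$ as the closed convex hull of limits of nearby gradients (with trivial recession cone by Lipschitzness), combined with continuity of $\tilde{T}$, forces $\partial G(v)=\{\tilde{T}(v)\}$ everywhere. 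Your route is more self-contained; the paper's is shorter but leans on an external reference. You also spell out uniqueness of the argmax via the essential-smoothness/essential-strict-convexity duality and verify $\operatorname{dom}(G^{*})\subseteq\Delta$, whereas the paper simply notes that the unconstrained argmax equals $\partial f(v)=\{\tilde{T}(v)\}\subseteq\Delta$ and then restricts the feasible set to $\Delta$; both arguments are valid.
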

\begin{proof}
Suppose that the normalized nontransitive preference $\tilde{T}(v)$ is cyclically monotone. By \citet[][Thm. 24.8]{rockafellar1970convex}, there is a closed proper convex function $f$ such that $\tilde{T}(v) \in \partial f(v)$ where $\partial f$ is the sub-differential of $f$.\footnote{The subdifferential of $f$ at $v$ is the set of all subgradient vectors of $f$ at $v$. Here a vector $w^*$ is a subgradient of $f$ at $v$ when $f(z) \ge f(v) + \sum_{a\in A }w_a^*(z_a-v_a)$ for all $z$.}  By Lemma \ref{lem:maximalCM} in the Appendix, the correspondence $v\rightarrow \{ \tilde T (v)\}$ is maximal and hence $\partial f (v)=\{ \tilde T (v) \},\forall v \in \R^{A} $.

From the convex function $f$, we define the convex conjugate of $f$, 
\[ f^*(w^*) = \sup_{ w \in \mathbb{R}^{A} } \left\{\sum_{a \in A} w_a w_a^* - f(w) \right\}\]
which is a convex function by construction. We show that $f^*$ defines a perturbation function that gives a transitive PUM  rationalization.

Consider the unconstrained transitive perturbed utility model of 
\begin{equation}\label{eq:PUM}
   \sup_{w^* \in \R^A} \left\{ \sum_{a \in A} v_a w_a^* - f^{*}(w^*) \right\} . 
\end{equation}
This is maximized at $w^* = \tilde{T}(v)$ by an application of \citet[][Thm. 23.5]{rockafellar1970convex} since $f$ is a closed proper convex function. To see this, note that $w^*$ attains the maximum if and only if $v \in \partial f^*(w^*)$, and which occurs if and only if $w^* \in \partial f(v)$. Finally, this means that $w^* = \tilde{T}(v)$ since $\partial f = \{\tilde{T}(v)\}$. Thus, we can restrict the domain of the supremum in \eqref{eq:PUM} to $\Delta$ without changing the optimizer that is obtained. Therefore, the transitive perturbed utility model of Definition \ref{def:PUM} with $C=f^{*}$ rationalizes the normalized nontransitive preferences $\tilde{T}$.

Conversely, suppose that choices are generated by a transitive perturbed utility model so that 
\[ p^*(v) = \argmax_{p \in \Delta} \{ \sum_{a \in A} p_a v_a - C(p) \}.\]
We have that $p^*(v)$ is continuous via the theorem of the maximum ( see for example \citet[][Thm. 9.17]{sundaram1996first}), since $p^*(v)$ is assumed to be a unique maximizer.

It follows that $\tilde{T}(v)=p^*(v)$ is a normalized nontransitive preference that is cyclically monotone. The argument essentially follows from \citet[][Thm. 24.8]{rockafellar1970convex}, but we reproduce the argument here. For any cycle $v^i,i=1,\ldots, I$ with $v^I = v^0$, we have for each $i$ that   
\[ \sum_{a \in A} p_{a}^*(v^{i}) v_a^i - C(p^*(v^i)) \ge \sum_{a \in A} p_{a}^*(v^{i+1}) v_a^i - C(p^*(v^{i+1})) ,\]
since $p^*(v^i)$ maximizes the transitive perturbed utility at $v^i$.
Rearranging shows that
\[ \sum_{a \in A} (p_{a}^*(v^{i}) - p_a^*(v^{i+1})) v_a^i \ge C(p^*(v^i)) - C(p^*(v^{i+1})).\]
Summing this over all $i$, we get that 
\[ \sum_{i=1}^I \sum_{a \in A} (p_{a}^*(v^{i}) - p_a^*(v^{i+1})) v_a^i \ge 0.\]
Finally, this last inequality can be rearranged to 
\[ \sum_{i=1}^I \sum_{a \in A} p_{a}^*(v^i) (v_a^i - v_a^{i-1})\ge 0.\]
Since the cycle is arbitrary, the choices are cyclically monotone. 
\end{proof}

\section{Conclusion}\label{sec:conclusion}
This paper shows a behavioral equivalence between a model of normalized nontransitive choice and transitive stochastic choice under a monotonicity assumption. Thus, we formalize the conjecture of Peter Fishburn that nontransitive choice might be transitive in a richer class of preferences. In detail, we assume that the nontransitive preference is a strength of preference as suggested by \cite{bell1982regret} and \cite{dyer1982relative} so that how likely an alternative is chosen is proportional to the strength of preference. This is motivated by the assumption of proportional discernment for physical stimuli (e.g. difference in weights) that is often known as the Weber-Fechner law. It is worthwhile to note that early stochastic choice modeling of \cite{thurstone1927law} and \cite{luce1959individual} were also motivated by the observation that physical stimulai are discerned proportionally. Therefore, this paper helps link the literature on nontransitive preferences and transitive stochastic choice.

One implication of the behavioral equivalence of normalized nontransitive preferences and transitive stochastic choice is that one cannot distinguish ``complementarity in attention" and ``complementarity in demand." In other words, one cannot distinguish between complementarity that occurs due to context effects and complementarity that occurs from a preference for randomization \citep{machina1985stochastic}. This result follows since we do not place assumptions on how attributes interact and work directly with values. If one places stronger assumptions on how the attributes enter, then it might be possible to differentiate these two effects. Alternatively, one way to identify information on nontransitive preferences is to use additional non-choice data such as decision times \citep{alos2022identifying} or brain imaging data \citep{kalenscher2010neural}. However, in these cases one would need to make additional assumptions on how nonchoice data links to attention or a preference for randomization.  

\section{Appendix}
\begin{lemma}\label{lem:maximalCM}
    Let $T:\R^A \rightarrow \R^A$ be a continuous function such that the single-valued correspondence $v\rightarrow \{ T(v)\}$ is cyclically monotone. 
    Then $v\rightarrow \{ T(v)\}$  is maximal cyclically monotone.
\end{lemma}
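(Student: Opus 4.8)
The plan is to prove maximality by contradiction, showing that the graph of $T$ admits no proper cyclically monotone enlargement. Recall that $v \rightarrow \{T(v)\}$ is \emph{maximal} cyclically monotone if its graph is not properly contained in the graph of any other cyclically monotone relation. Since every subset of a cyclically monotone set is again cyclically monotone, any such proper enlargement would contain a pair $(\bar v, \bar w)$ outside the graph of $T$, i.e.\ with $\bar w \neq T(\bar v)$, for which the augmented set $\{(v, T(v)) : v \in \R^A\} \cup \{(\bar v, \bar w)\}$ is still cyclically monotone. So it suffices to fix an arbitrary $(\bar v, \bar w)$ with $\bar w \neq T(\bar v)$ and derive a contradiction from the assumption that appending it preserves cyclic monotonicity.

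The key reduction is that cyclic monotonicity implies ordinary monotonicity (the two-cycle case). Applying the two-cycle condition to the pair $(\bar v, \bar w)$ and $(v, T(v))$ would give, for every $v \in \R^A$,
\[ \sum_{a \in A} (\bar w_a - T_a(v))(\bar v_a - v_a) \ge 0. \]
I would then probe this inequality along lines through $\bar v$. Fixing a direction $e \in \R^A$ and a scalar $t > 0$, substituting $v = \bar v + t e$ and dividing by $t$ yields
\[ \sum_{a \in A} (\bar w_a - T_a(\bar v + t e))\, e_a \le 0. \]
Letting $t \to 0^+$ and using the continuity of $T$ to pass $T(\bar v + t e) \to T(\bar v)$ through the limit, I obtain $\sum_{a \in A} (\bar w_a - T_a(\bar v))\, e_a \le 0$. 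Because $e$ is an arbitrary direction, replacing $e$ by $-e$ forces the reverse inequality as well, so $\sum_{a \in A} (\bar w_a - T_a(\bar v))\, e_a = 0$ for all $e$, whence $\bar w = T(\bar v)$, contradicting $\bar w \neq T(\bar v)$.

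The place where continuity does the real work is the limiting step: I must be able to divide the two-cycle inequality by $t$ and pass to the limit, and it is precisely here that single-valuedness and continuity of $T$ are indispensable, since without them the appended value $\bar w$ could legitimately differ from the limit of $T$ along the approaching sequence. I expect this to be the only substantive point. The remaining ingredients are routine: the implication from cyclic to two-cycle monotonicity is immediate from the definition, and the passage from ``no single point can be appended'' to ``maximal cyclically monotone'' uses only that subsets of cyclically monotone sets are cyclically monotone.
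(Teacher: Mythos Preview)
Your argument is correct and self-contained, but it takes a different route from the paper. The paper does not argue directly: it simply invokes \citet[][Thm.~5.2]{Crouzeix2010}, checking that with $S=D=\R^A$ the normal cone $N_D(a)$ is trivial and that continuity of $T$ forces $\Gamma_c(a,S)=\{T(a)\}$, after which maximality is read off from that black-box result. Your proof, by contrast, is the classical elementary argument that a continuous single-valued monotone operator with full domain is maximal monotone, adapted to the cyclically monotone setting: you append a hypothetical extra pair, extract the two-cycle (monotone) inequality, probe it along rays $\bar v + t e$, divide by $t$, and let continuity do the rest. The trade-off is clear: the paper's version is shorter on the page but outsources the content to an external reference, while your version is fully self-contained and exposes exactly where continuity and full domain are used. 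Both are valid; yours is arguably preferable here because the lemma is short enough that the direct argument costs little and spares the reader a literature lookup.
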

\begin{proof}
    We will apply  \citet[][Thm. 5.2]{Crouzeix2010}. With their notation, $S=D=\R^A$ and we can let $V=\R ^A$. For any $a\in D, N_D(a)={0}$. Moreover, translating their definition to the present context, $\Gamma_c(a,S)=\{T(a)\}$, since $T$ is continuous. This implies by their Theorem 5.2 that the correspondence $v\rightarrow \{T(v)\}$ is maximal cyclically monotone.
\end{proof}

\bibliographystyle{ecta}
\bibliography{ref,mf}

\end{document}